\pgfplotsset{compat=1.18}
\DeclareMathOperator{\tr}{tr}
\let\originalleft\left
\let\originalright\right
\renewcommand{\left}{\mathopen{}\mathclose\bgroup\originalleft}
\renewcommand{\right}{\aftergroup\egroup\originalright}
\newcommand{\ket}[1]{\left| #1 \right\rangle}
\newcommand{\braket}[2]{\left\langle #1 \middle| #2 \right\rangle}
\newcommand{\prin}[2]{\left\langle#1, #2\right\rangle}
\newcommand{\inner}[2]{\langle#1,#2\rangle}
\newcommand{\abs}[1]{\left|#1\right|}
\newcommand{\de}[1]{\left(#1\right)}
\newcommand{\De}[1]{\left[#1\right]}
\newcommand{\DE}[1]{\left\{#1\right\}}
\newcommand{\R}{\mathbb{R}}
\newcommand{\p}[2]{P( #1 | #2 )}
\newcommand{\pp}[2]{P(\, #1\, |\, #2\, )}
\newcommand{\fa}{\text{for all }}
\newtheorem{theorem}{Theorem}
\newtheorem*{theorem*}{Theorem}
\newtheorem{lemma}[theorem]{Lemma}
\newtheorem{definition}[theorem]{Definition}
\begin{document}
\title{Post-selection games}
\author{Víctor Calleja Rodríguez}\affiliation{Departamento de Física Teórica, Atómica y Óptica, Universidad de Valladolid, 47011 Valladolid, Spain}
\author{Ivan A. Bocanegra-Garay}\orcid{0000-0002-5401-7778}\affiliation{Departamento de Física Teórica, Atómica y Óptica, Universidad de Valladolid, 47011 Valladolid, Spain}
\author{Mateus Araújo}\orcid{0000-0003-0155-354X}\affiliation{Departamento de Física Teórica, Atómica y Óptica, Universidad de Valladolid, 47011 Valladolid, Spain}\affiliation{Laboratory for Disruptive Interdisciplinary Science (LaDIS), Universidad de Valladolid, 47011 Valladolid, Spain.}
\date{\today}

\begin{abstract}
	In this paper, we introduce post-selection games, a generalization of nonlocal games where each round can be not only won or lost by the players, but also discarded by the referee. Such games naturally formalize possibilistic proofs of nonlocality, such as Hardy's paradox. We develop algorithms for computing the local and Tsirelson bounds of post-selection games. Furthermore, we show that they have an unbounded advantage in statistical power over traditional nonlocal games, making them ideally suited for analysing Bell tests with low detection efficiency.
\end{abstract}
\maketitle

\section{Introduction}

The discovery that local hidden variable models cannot reproduce quantum-mechanical correlations \cite{bell1964} was a pivotal moment in the history of physics. It turned a philosophical question into an experimental one \cite{chsh1969}, and blossomed into an entire field of research \cite{brunner2014}.

The inherently probabilistic nature of the argument was widely seen as unsatisfactory, which led to attempts to prove nonlocality ``without inequalities'' \cite{gill2014}, first by Greenberger, Horne, Zeilinger, and Mermin in the multipartite case \cite{greenberger1990,mermin1990}, followed by Hardy in the bipartite case \cite{hardy1992,hardy1993nonlocality}. Nowadays, these proofs are considered to belong to different categories: GHZ-Mermin-like proofs display quantum pseudo-telepathy \cite{brassard2005b}, whereas Hardy-like proofs display possibilistic nonlocality \cite{mansfield2012}. The crucial difference is that Hardy-like proofs do not rely on some quantum correlation being produced with certainty, but rather on certain events being possible or impossible.

This makes them not amenable to experimental tests, as it is not possible to demonstrate that some event is impossible; even if the frequency of some event is measured to be zero this does not imply impossibility, and in any case experimental tests of Hardy's paradox have measured nonzero frequencies for the impossible events \cite{boschi1997ladder,zhao2024,fedrizzi2011}. To obtain a conclusive demonstration of nonlocality, the latter two experiments \cite{zhao2024,fedrizzi2011} used unrelated Bell inequalities \cite{clauser1974experimental,eberhard1993background}, which are not maximally violated for correlations obeying the conditions of Hardy's paradox.

In order to solve these issues, we introduce post-selection games. They are a natural generalization of nonlocal games \cite{cleve2004} where, besides winning and losing, the players can also get a round of the game discarded. This allows us to post-select on the events entering a possibilistic proof of nonlocality, turning it into a post-selected pseudo-telepathy proof. This works even in the traditional Hardy scenario with two outcomes per party, where pseudo-telepathy is otherwise impossible \cite{cleve2004}.

Post-selection games do not, however, require perfect correlations to demonstrate nonlocality. Like regular pseudo-telepathy proofs, they naturally tolerate noise. Moreover, correlations that demonstrate possibilistic nonlocality win the post-selection games with maximal probability. Equivalently, the associated nonlinear Bell inequality is maximally violated by such correlations.

Regular pseudo-telepathy proofs often have much higher statistical power than probabilistic proofs of nonlocality \cite{vandam2004,araujo20}, and one might expect post-selection games to also have. To answer this question, we extend the tools of Ref. \cite{araujo20} to measure the statistical power of both regular nonlocal games and post-selection games. We find, counterintuitively, that in the ideal case, when the correlations allow for post-selected pseudo-telepathy, post-selection games are much weaker. On the other hand, in the noisy case, they become much stronger, with the ratio of their statistical powers going to infinity as the noise makes the correlations local.

\section{Preliminaries}\label{sec:hardy}

We shall have in mind a nonlocal game scenario where two separate parties, Alice and Bob, play the following cooperative game: in each round a referee sends them some inputs $x, y$ with probability $\mu(x,y)$, and they reply with some outputs $a, b$ with conditional probability $P(ab|xy)$ \cite{cleve2004}. A function $V(a,b,x,y) \in [0,1]$ then determines the probability with which they win the round. The probability of winning the game is then given by
\begin{equation}
	\omega(P) = \sum_{abxy} V(a,b,x,y)\mu(x,y)P(ab|xy) .
\end{equation}
It can be written in a more convenient form if we define $V_\mu(a,b,x,y) = V(a,b,x,y)\mu(x,y)$, and consider $V_\mu$ to be a vector with the same ordering as the vector $P$ of conditional probabilities $P(ab|xy)$; then
\begin{equation}
	\omega(P) = \inner{V_\mu}{P} ,
\end{equation}
where $\langle \cdot, \cdot\rangle$ is the natural inner product.

$P$ is known as a behaviour\footnote{Usually behaviours are defined to be nonsignalling, but this is not relevant for our purposes.} \cite{tsirelson1993}. A behaviour is called local if $P(ab|xy) = \sum_\lambda p(\lambda)p(a|x\lambda)p(b|y\lambda)$. The set of all local behaviours forms a polytope $\mathcal L$, whose vertices $\operatorname{ext}(\mathcal L)$ are deterministic behaviours. The maximal probability of winning the game with local behaviours is the local bound \cite{scarani2019bell} 
\begin{equation}
	\omega_\ell := \sup_{P \in \mathcal L} \inner{V_\mu}{P} = \max_{P \in \operatorname{ext}(\mathcal L)} \inner{V_\mu}{P}.
\end{equation}
A behaviour is called quantum if $P(ab|xy) = \tr[\rho(A^a_x \otimes B^b_y)]$ for a quantum state $\rho$ and POVMs $A^a_x, B^b_y$ \cite{scarani2019bell}. The supremum of the probability of winning the game with quantum behaviours is the Tsirelson bound \cite{tsirelson80}
\begin{equation}
	\omega_q := \sup_{P \in \mathcal Q} \inner{V_\mu}{P},
\end{equation}
where $\mathcal Q$ is the set of quantum behaviours. The prototypical example of a nonlocal game is the CHSH game, introduced by Tsirelson \cite{tsirelson1997}. In it $a,b,x,y \in \{0,1\}$, $\mu(x,y) = 1/4$ and $V(a,b,x,y) = 1$ if $a\oplus b = xy$ and zero otherwise. Its local bound is $3/4$, and its Tsirelson bound is $(2+\sqrt2)/4$.

\subsection{Hardy's paradox}\label{sec:oldhardy}

We can now introduce Hardy's paradox \cite{hardy1992,hardy1993nonlocality}. It consists of two steps:
\begin{enumerate}
	\item[(i)] showing that if a behaviour $P$ is local, then $P(01|01) = P(10|10) = P(00|11) = 0$ implies that $P(00|00) = 0$.
	\item[(ii)] showing that there exists a quantum behaviour $P_H$ such that $P_H(01|01) = P_H(10|10) = P_H(00|11) = 0$ but $P_H(00|00) > 0$.
\end{enumerate}
We want to formulate a nonlocal game that, in some sense, represents the paradox. A minimal desideratum is that the probability of winning this game should be maximal for behaviours $P_H$ that have the property from step (ii), called Hardy behaviours. This is, however, already impossible, because any linear Bell inequality that is maximally violated by a Hardy behaviour is necessarily also maximally violated by a local behaviour \cite{goh2018}.

Another desideratum is to make Hardy's paradox experimentally testable. For this purpose, the literature usually associates it with the CH inequality \cite{mermin1994,rabelo2012,clauser1974experimental}:
\begin{equation}
P(00|00) - P(01|01) - P(10|10) - P(00|11) \le 0.
\end{equation}
If we translate it into a nonlocal game in the straightforward way (using Theorem 2 of Ref. \cite{araujo20}) the resulting game is not only won in the event $(00|00)$ and lost in the events $(01|01), (10|10), (00|11)$, as we would like, but also won and lost in several irrelevant events that do not enter the paradox. More precisely, the resulting $V$ is given by
\begin{equation}\label{eq:vch}
	\left(\begin{array}{cc|cc}
    1 & 0 & 1 & 0 \\
    0 & 0 & 1 & 1 \\ \hline
    1 & 1 & 0 & 1 \\
    0 & 1 & 1 & 1
  \end{array}\right),
\end{equation}
where the four quadrants correspond to the inputs $x,y$ in the order ${\scriptsize \left(\begin{array}{c|c} 0,0 & 0,1 \\ \hline 1,0 & 1,1 \end{array}\right)}$. In each quadrant the four numbers correspond to the outputs $a,b$ in the order ${\scriptsize \begin{array}{cc} 0,0 & 0,1 \\ 1,0 & 1,1 \end{array}}$.

Both desiderata can be fulfilled by \emph{post-selecting} on the set of events 
\begin{equation}
    E_\text{Hardy} = \{(00|00), (01|01), (10|10), (00|11)\}.
\end{equation}
In this way, we neither win nor lose in the irrelevant events, and make the corresponding Bell inequality nonlinear, which can then be both nontrivial and maximally violated by a Hardy behaviour.

Two objections might be raised: the first is that post-selection can introduce loopholes in a Bell test, most famously the fair-sampling loophole \cite{larsson2014}. That can be handled simply by computing the local bound while taking post-selection into account, as we shall do in Section \ref{sec:local}. The second objection is that nonlinear Bell inequalities are, in general, vulnerable to memory attacks \cite{weilenmann2025}. We note that although our Bell inequalities are nonlinear, they are quasiconvex functions, and therefore the set of correlations they delimit is convex, and convex sets of correlations are not vulnerable to such memory attacks.

\section{Post-selection games}\label{sec:formal}

We can now define a post-selection game: it is a generalization of a nonlocal game defined by three functions $S, V, \mu$. As before, $\mu(x,y)$ and $V(a,b,x,y)$ determine the probability distribution over the inputs $x,y$ and the probability of winning a round with event $(ab|xy)$, respectively. $S(a,b,x,y)$ represents the probability of post-selecting a round with event $(ab|xy)$. The function $\mu$ must respect $\mu(x,y) \ge 0$ and $\sum_{xy} \mu(x,y) = 1$, while $V,S$ only need to take values in $[0,1]$, being otherwise unconstrained.

We can then represent Hardy's paradox as a post-selection game, as discussed in Section \ref{sec:oldhardy}, by letting $S(a,b,x,y) = 1$ on the set of events $E_\text{Hardy}$ and zero otherwise, and $V(a,b,x,y) = 1$ on the event $(00|00)$ and zero otherwise. Writing them in the same ordering as Equation \eqref{eq:vch}: 
\begin{equation}
  S = \left(\begin{array}{cc|cc}
    1 &  &     & 1 \\
       &  &     &  \\ \hline
       &  &  1 &  \\
    1 &  &     & 
  \end{array}\right)
  \quad\text{and}\quad
  V = \left(\begin{array}{cc|cc}
    1 &  &     &  \\
       &  &     &  \\ \hline
       &  &   &  \\
     &  &     & 
  \end{array}\right),
\end{equation}
where we omitted the entries equal to zero for clarity. We let $\mu(x,y) = 1/4$ for simplicity.

The probability of winning such a post-selection game with a behaviour $P$ is then simply the conditional probability of winning given that post-selection was successful, which is the joint probability of post-selecting and winning divided by the probability of post-selection:
\begin{equation}
	\omega(P) = \frac{\sum_{abxy} S(a,b,x,y)V(a,b,x,y)\mu(x,y)P(ab|xy)}{\sum_{abxy} S(a,b,x,y)\mu(x,y)P(ab|xy)}.
\end{equation}
We leave it undefined if the probability of post-selection is zero.

This can be written in a more convenient form if we define $S_\mu(a,b,x,y) = S(a,b,x,y)\mu(x,y)$, $V_\mu(a,b,x,y) = S(a,b,x,y)V(a,b,x,y)\mu(x,y)$, and regard $S_\mu, V_\mu$, and $P$ as vectors, as before:
\begin{equation}
	\omega(P) = \frac{\prin{V_\mu}{P}}{\prin{S_\mu}{P}}.
\end{equation}
The denominator $\inner{S_\mu}{P}$ is the probability of a successful post-selection, which we will denote $\gamma(P)$.

For the Hardy game, we have then
\begin{equation}
	\omega(P) = \frac{P(00|00)}{P(00|00) + P(01|01) + P(10|10) + P(00|11)} ,
\end{equation}
and
\begin{equation}
	\gamma(P) = \frac14\de{P(00|00) + P(01|01) + P(10|10) + P(00|11)}.
\end{equation}

Note that for any Hardy behaviour $P_H$ we have that $\omega(P_H) = 1$, so to show that it is a nontrivial post-selection game, we only need to show that the local bound is strictly smaller than one.

\subsection{Local bound}\label{sec:local}

We define the local bound as
\begin{equation}
	\omega_\ell := \sup_{P \in \mathcal L;\, \gamma(P) > 0} \omega(P),
\end{equation}
where $\mathcal L$ is the local polytope.

We are going to show that we can restrict our attention to the convex hull of deterministic behaviours with nonzero probability of post-selection, and then that the maximum is attained at the extreme points. This implies that, similarly to nonlocal games without postselection, we compute the local bound simply by testing a finite set of deterministic behaviours.

\begin{lemma}\label{lemma:nonzero}
For all behaviours $P$ with post-selection probablity $\gamma(P) > 0$, there exists a behaviour $P'$ such that $\omega(P') = \omega(P)$ and $P'$ is a convex combination only of extreme points $E_i$ with post-selection probability $\gamma(E_i) > 0$.
\end{lemma}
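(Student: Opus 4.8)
The plan is to exploit two facts: that $\mathcal L$ is a polytope, so every local behaviour decomposes as a convex combination of its extreme (deterministic) behaviours; and that the numerator and denominator of $\omega$ are built from the \emph{same} vector up to a pointwise factor in $[0,1]$. Writing $P = \sum_i \lambda_i E_i$ with $E_i \in \operatorname{ext}(\mathcal L)$, $\lambda_i \ge 0$ and $\sum_i \lambda_i = 1$, bilinearity of the inner product gives $\inner{V_\mu}{P} = \sum_i \lambda_i \inner{V_\mu}{E_i}$ and $\gamma(P) = \inner{S_\mu}{P} = \sum_i \lambda_i \gamma(E_i)$.

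The key observation, which is also the crux of the lemma, is the entrywise bound $0 \le V_\mu \le S_\mu$. Indeed, from the definitions $S_\mu = S\mu$ and $V_\mu = SV\mu$ we have $V_\mu(a,b,x,y) = V(a,b,x,y)\,S_\mu(a,b,x,y)$ with $V(a,b,x,y) \in [0,1]$. Since each extreme point $E_i$ is a deterministic behaviour with nonnegative entries, this yields $0 \le \inner{V_\mu}{E_i} \le \inner{S_\mu}{E_i} = \gamma(E_i)$ for every $i$. In particular, whenever $\gamma(E_i) = 0$ the numerator contribution $\inner{V_\mu}{E_i}$ is squeezed to zero as well; this is exactly what guarantees that a vertex of zero post-selection probability can be removed from the numerator and the denominator \emph{simultaneously}.

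I would then split the index set into $I_+ = \DE{i : \gamma(E_i) > 0}$ and $I_0 = \DE{i : \gamma(E_i) = 0}$. By the previous step the $I_0$ terms vanish in both $\inner{V_\mu}{P}$ and $\gamma(P)$, so both sums effectively range over $I_+$. Since $\gamma(P) > 0$ by hypothesis and $\gamma(P) = \sum_{i \in I_+}\lambda_i\gamma(E_i)$, the total surviving weight $\Lambda := \sum_{i \in I_+}\lambda_i$ is strictly positive, and I define
\begin{equation}
  P' := \frac{1}{\Lambda}\sum_{i \in I_+}\lambda_i E_i ,
\end{equation}
which is a genuine behaviour, being a convex combination of extreme points with weights summing to one, and is supported only on extreme points with $\gamma(E_i) > 0$.

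Finally I would check that $\omega$ is unchanged. By construction $\inner{V_\mu}{P'} = \tfrac1\Lambda\sum_{i\in I_+}\lambda_i\inner{V_\mu}{E_i}$ and $\gamma(P') = \tfrac1\Lambda\sum_{i\in I_+}\lambda_i\gamma(E_i) = \gamma(P)/\Lambda$, so the common factor $1/\Lambda$ cancels in the ratio and $\omega(P') = \inner{V_\mu}{P'}/\gamma(P') = \inner{V_\mu}{P}/\gamma(P) = \omega(P)$, where I used that the $I_0$ terms already dropped from both original sums. I would also note $\gamma(P') = \gamma(P)/\Lambda > 0$, so that $\omega(P')$ is well defined. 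The argument is routine once the decomposition is in place; the only genuinely delicate point is the entrywise inequality $V_\mu \le S_\mu$, without which discarding the zero-post-selection vertices could alter the numerator and spoil the equality of the $\omega$ values.
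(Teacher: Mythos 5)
Your proof is correct and follows essentially the same route as the paper: decompose $P$ over the vertices, use the entrywise bound $V_\mu \le S_\mu$ to drop the zero-post-selection vertices from both numerator and denominator, and renormalize the surviving weights. The only difference is that you explicitly justify the implication $\gamma(E_i)=0 \Rightarrow \inner{V_\mu}{E_i}=0$ via the entrywise inequality, which the paper merely asserts.
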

\begin{proof}
Let
\begin{equation}
P = \sum_{i \in \mathcal S} \lambda_i E_i + \sum_{i \not\in \mathcal S} \lambda_i E_i,
\end{equation}
where $\mathcal S$ is the set of indices such that $\inner{S_\mu}{E_i} > 0$. Then
\begin{equation}
\omega(P) = \frac{\sum_{i \in \mathcal S} \lambda_i\inner{V_\mu}{E_i} + \sum_{i \not\in \mathcal S} \lambda_i\inner{V_\mu}{E_i}}{\sum_{i \in \mathcal S} \lambda_i\inner{S_\mu}{E_i} + \sum_{i \not\in \mathcal S} \lambda_i\inner{S_\mu}{E_i}} = \frac{\sum_{i \in \mathcal S} \lambda_i\inner{V_\mu}{E_i}}{\sum_{i \in \mathcal S} \lambda_i\inner{S_\mu}{E_i}} ,
\end{equation}
since $\inner{S_\mu}{E_i} = 0$ implies $\inner{V_\mu}{E_i} = 0$.

Let then $N = \sum_{i \in \mathcal S} \lambda_i$, and $\lambda_i' = \lambda_i/N$, such that $\sum_{i \in \mathcal S} \lambda_i' = 1$. Note that $N > 0$ because $\inner{S_\mu}{P} > 0$. Then
\begin{equation}
\frac{\sum_{i \in \mathcal S} \lambda_i'\inner{V_\mu}{E_i}}{\sum_{i \in \mathcal S} \lambda_i'\inner{S_\mu}{E_i}} = \frac{\sum_{i \in \mathcal S} \lambda_i\inner{V_\mu}{E_i}}{\sum_{i \in \mathcal S} \lambda_i\inner{S_\mu}{E_i}},
\end{equation}
and therefore
\begin{equation}
	P' = \sum_{i \in \mathcal S} \lambda_i' E_i ,
\end{equation}
is the desired behaviour.
\end{proof}

Now, we are going to show that the probability of winning a post-selection game is a quasiconvex function and is therefore maximized at the extreme points of its domain.
\begin{definition}
    A function $ f : C \subseteq \R^N \to \R $ is said to be quasisconvex if its domain $ C $ is convex and, for all $ \alpha \in \R $, the sublevel set
    \begin{equation}
        S_\alpha = \big\{ x \in C \,:\, f(x) \leq \alpha \big\} ,
    \end{equation}
    is convex.
\end{definition}
\begin{lemma} \label{prop:QuasiCharac}
    Let $ C \subseteq \R^N $ be a convex set. A function $ f : C \to \R $ is quasiconvex if, and only if,
    \begin{equation} \label{CuasiCarac}
        f\big( \lambda x + ( 1-\lambda ) y \big) \leq \max\big\{ f(x),\, f(y) \big\} \quad \fa x,\, y \in C \text{ and all } \lambda \in [ 0,\, 1 ].
    \end{equation}
\end{lemma}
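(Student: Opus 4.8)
The plan is to prove the two implications separately, since the statement is an equivalence between the sublevel-set definition of quasiconvexity and the inequality \eqref{CuasiCarac}; the convexity of the domain $C$ is assumed throughout and therefore need not be re-established. The bridge between the two formulations in both directions will be the choice $\alpha = \max\{f(x), f(y)\}$.

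First, for the forward direction, I would assume $f$ is quasiconvex and fix arbitrary $x, y \in C$ and $\lambda \in [0,1]$. Setting $\alpha = \max\{f(x), f(y)\}$, both $x$ and $y$ lie in the sublevel set $S_\alpha$ by construction, since $f(x) \le \alpha$ and $f(y) \le \alpha$. Because $S_\alpha$ is convex by hypothesis, the convex combination $\lambda x + (1-\lambda) y$ also belongs to $S_\alpha$, which is exactly the assertion $f(\lambda x + (1-\lambda)y) \le \alpha = \max\{f(x), f(y)\}$.

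For the converse, I would assume \eqref{CuasiCarac} holds and show that every sublevel set is convex. Fixing $\alpha \in \R$ and taking any $x, y \in S_\alpha$, we have $\max\{f(x), f(y)\} \le \alpha$; applying the inequality then gives $f(\lambda x + (1-\lambda)y) \le \max\{f(x), f(y)\} \le \alpha$ for every $\lambda \in [0,1]$, so the combination remains in $S_\alpha$ and the sublevel set is convex. As this holds for arbitrary $\alpha$, $f$ is quasiconvex.

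Since both directions reduce to a single substitution, I do not expect any genuine obstacle here; the result is a standard characterization. The only point requiring a moment's care is that the convex combinations appearing in the sublevel sets must themselves lie in $C$ for the expression $f(\lambda x + (1-\lambda)y)$ to be defined — which is guaranteed precisely because $C$ is assumed convex, so this subtlety is already absorbed into the hypothesis.
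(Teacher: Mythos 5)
Your proof is correct and is precisely the standard argument that the paper defers to (Section 3.4 of Boyd and Vandenberghe): the forward direction via the sublevel set at $\alpha = \max\{f(x), f(y)\}$ and the converse by checking convexity of each $S_\alpha$ directly from the inequality. Nothing is missing.
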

For the proof of Lemma \ref{prop:QuasiCharac}, see section 3.4 of Ref. \cite{boyd2004convex}.
\begin{lemma}\label{lemma:quasi}
	The probability of winning a post-selection game $\omega: \mathcal P \to \R$ is a quasiconvex function for any convex set of behaviours $\mathcal P$ such that $P \in \mathcal P$ implies $\gamma(P) > 0$.
\end{lemma}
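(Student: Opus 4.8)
The plan is to apply the characterization of quasiconvexity from Lemma~\ref{prop:QuasiCharac}. Since $\mathcal P$ is convex by hypothesis, it suffices to verify that for all $x, y \in \mathcal P$ and all $\lambda \in [0,1]$,
\[
	\omega\de{\lambda x + \de{1-\lambda} y} \le \max\DE{\omega(x),\, \omega(y)}.
\]
The structural feature to exploit is that $\omega$ is a ratio of two \emph{linear} functionals, $\omega(P) = \inner{V_\mu}{P}/\inner{S_\mu}{P}$, whose denominator $\gamma(P) = \inner{S_\mu}{P}$ is strictly positive everywhere on $\mathcal P$ — which is exactly the standing assumption on $\mathcal P$.

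First I would record that both $\inner{V_\mu}{\cdot}$ and $\inner{S_\mu}{\cdot}$ are linear, hence distribute over the convex combination $z := \lambda x + (1-\lambda) y$, giving $\inner{V_\mu}{z} = \lambda \inner{V_\mu}{x} + (1-\lambda)\inner{V_\mu}{y}$ and likewise for $S_\mu$. Setting $M := \max\DE{\omega(x),\, \omega(y)}$ and assuming without loss of generality that the maximum is attained at $y$, so that $\inner{V_\mu}{y} = M \inner{S_\mu}{y}$, the target inequality $\omega(z) \le M$ becomes, after clearing the strictly positive denominator $\inner{S_\mu}{z}$, the equivalent statement $\inner{V_\mu}{z} \le M \inner{S_\mu}{z}$.

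The key step is then a short cancellation: expanding both sides by linearity, the $(1-\lambda)$-terms cancel exactly because $\inner{V_\mu}{y} = M\inner{S_\mu}{y}$, so the claim reduces to $\lambda \inner{V_\mu}{x} \le \lambda M \inner{S_\mu}{x}$, which holds since $\omega(x) \le M$ together with $\inner{S_\mu}{x} > 0$. The only delicate point — and the closest thing to an obstacle in an otherwise routine linear-fractional argument — is the direction of the inequality when clearing denominators: this is precisely where the hypothesis $\gamma(P) > 0$ throughout $\mathcal P$ is indispensable, as it guarantees $\inner{S_\mu}{z} > 0$ and hence that cross-multiplication preserves the sense of the inequality. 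Conceptually, running the same computation with both inequalities reversed shows $\omega$ is simultaneously quasiconcave, reflecting the standard fact that linear-fractional functions are quasilinear on any region where the denominator keeps a fixed sign; but only quasiconvexity is needed here.
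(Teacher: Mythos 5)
Your proof is correct and rests on the same core mechanism as the paper's: the linear-fractional structure of $\omega$ together with the strict positivity of $\inner{S_\mu}{P}$, which lets you clear the denominator and reduce to a linear inequality. The only (cosmetic) difference is that you verify the max-characterization of Lemma~\ref{prop:QuasiCharac} directly, whereas the paper checks the sublevel-set definition by observing that $\omega(P)\le\alpha$ is equivalent to the linear condition $\inner{V_\mu-\alpha S_\mu}{P}\le 0$; the two routes are interchangeable here.
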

\begin{proof}
	The sublevel sets are given by
	\begin{equation}
		S_\alpha = \DE{P \in \mathcal P\,:\, \omega(P) \le \alpha} .
	\end{equation}
	Since by assumption $\inner{S_\mu}{P} > 0$, we have that $\omega(P) \le \alpha$ iff $f(P) := \inner{V_\mu - \alpha S_\mu}{P} \le 0$. But $f(P)$ is a linear function of $P$, and in particular convex. This means that if $f(P_0) \le 0$ and $f(P_1) \le 0$ then $f(\lambda P_0 + (1-\lambda)P_1) \le 0$ for all $\lambda \in [0,1]$. This implies that for all $\alpha$ the sublevel sets $S_\alpha$ are convex.
\end{proof}
\begin{theorem}\label{thm:localbound}
\begin{equation}
	\omega_\ell = \max_{E \in \operatorname{ext}(\mathcal L);\, \gamma(E) > 0} \omega(E) ,
\end{equation} where $\operatorname{ext}(\mathcal L)$ denotes the set of vertices of the local polytope.
\end{theorem}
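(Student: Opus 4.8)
The plan is to combine the two preceding lemmas with the standard fact that a quasiconvex function on a polytope attains its maximum at a vertex. First I would restrict the domain to the convex hull of the ``good'' vertices, $C := \conv\DE{E_i \in \operatorname{ext}(\mathcal L) : \gamma(E_i) > 0}$. By Lemma \ref{lemma:nonzero}, every behaviour $P \in \mathcal L$ with $\gamma(P) > 0$ has a companion $P' \in C$ satisfying $\omega(P') = \omega(P)$, so the supremum defining $\omega_\ell$ is unchanged if taken over $C$ instead of over $\DE{P \in \mathcal L : \gamma(P) > 0}$. Conversely, since $\gamma$ is linear and strictly positive on each vertex of $C$, every point of $C$ is a convex combination $\sum_i \lambda_i' E_i$ with $\gamma\de{\sum_i \lambda_i' E_i} = \sum_i \lambda_i' \gamma(E_i) > 0$ (the weights sum to one and each $\gamma(E_i) > 0$); hence $C \subseteq \mathcal L$ is a convex set on which $\gamma > 0$ holds throughout. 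This yields $\omega_\ell = \sup_{P \in C} \omega(P)$.

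Second, I would invoke Lemma \ref{lemma:quasi}: because $C$ is convex with $\gamma > 0$ everywhere, $\omega$ is quasiconvex on $C$. The core step is then to show that a quasiconvex function on the polytope $C$ is maximized at one of its vertices. Using the characterization of Lemma \ref{prop:QuasiCharac}, namely $\omega\de{\lambda x + (1-\lambda)y} \le \max\DE{\omega(x), \omega(y)}$, I would argue by induction on the number of vertices: writing any $P \in C$ as $\sum_i \lambda_i' E_i$ and peeling off one vertex at a time, the two-point inequality gives $\omega(P) \le \max_i \omega(E_i)$. Since the vertices themselves lie in $C$, the reverse inequality is immediate, so $\sup_{P \in C} \omega(P) = \max_i \omega(E_i)$. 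As the extreme points of $C$ are exactly the vertices of $\mathcal L$ with $\gamma > 0$, this equals the claimed right-hand side, and the supremum is in fact attained, so $\omega_\ell$ is a genuine maximum.

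The main obstacle, and the only nonroutine point, is the inductive passage from the two-point quasiconvexity inequality to the statement about an arbitrary convex combination of vertices. Care is needed because a convex combination of $n$ points must be regrouped as a two-point combination of a single vertex with a convex combination of the remaining $n-1$ points; that intermediate point is itself an element of $C$, which is precisely what is required for Lemma \ref{lemma:quasi} to apply to it, and the hypothesis $\gamma > 0$ on all of $C$ is what guarantees $\omega$ is well-defined at every such intermediate point. Everything else is bookkeeping, provided one also notes the elementary fact that each $E_i$ is genuinely an extreme point of $C$ (a vertex of $\mathcal L$ cannot be a convex combination of other vertices), so that $\operatorname{ext}(C) = \DE{E_i \in \operatorname{ext}(\mathcal L) : \gamma(E_i) > 0}$.
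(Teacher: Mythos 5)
Your proof is correct and takes essentially the same route as the paper's: Lemma \ref{lemma:nonzero} reduces the problem to the convex hull of the vertices with positive post-selection probability, and Lemmas \ref{prop:QuasiCharac} and \ref{lemma:quasi} then place the maximum at a vertex. The peeling-off induction you describe is precisely the step the paper leaves implicit, and your checks (that $\gamma>0$ holds throughout the hull, and that the relevant vertices remain extreme points of it) are the right ones.
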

\begin{proof}
	Because of Lemma \ref{lemma:nonzero}, the maximization can be restricted to the convex hull of the vertices with nonzero probability of post-selection. Lemmas \ref{prop:QuasiCharac} and \ref{lemma:quasi} then imply that the maximum is attained at the vertices.
\end{proof}

For the Hardy game, a computer search through the vertices finds that
\begin{equation}
	\omega_\ell = \frac12,
\end{equation}
showing that it is indeed a nontrivial post-selection game.

\subsection{Tsirelson bound}

We now turn to the question of how to compute the Tsirelson bound of a post-selection game. We define it as
\begin{equation}\label{eq:tsirelson}
	\omega_q := \sup_{P \in \mathcal Q;\, \gamma(P) > 0} \omega(P),
\end{equation}
where $\mathcal Q$ is the set of quantum behaviours.

A simple way of computing it is to do a binary search on $\alpha \in [\omega_\ell, 1]$ by solving the feasibility problems $\inner{V_\mu}{P} \le \alpha \inner{S_\mu}{P}$ for $\inner{S_\mu}{P} > 0$ using the NPA hierarchy \cite{navascues2008}. A more elegant and efficient method can be developed, however, using a straightforward generalization of linear-fractional programming (see Section 4.3.2 of Ref. \cite{boyd2004convex}) to arbitrary cones. 

First note that $\omega(\theta P) = \omega(P)$ for all $\theta > 0$; therefore we can rewrite problem \eqref{eq:tsirelson} as 
\begin{equation}
	\sup_{\theta > 0, P \in \mathcal Q; \prin{S_\mu}{P} > 0}  \frac{\prin{V_\mu}{\theta P}}{\prin{S_\mu}{\theta P}}.
\end{equation}
Now we use the degree of freedom $\theta$ to set the denominator to $1$, obtaining the equivalent problem
\begin{equation}
	\sup_{\theta > 0, P'/\theta \in \mathcal Q; \prin{S_\mu}{P'} = 1}  \prin{V_\mu}{P'},
\end{equation}
where we also defined $P' = \theta P$.

To turn this into a conic problem, we just need to define the convex cone of unnormalized behaviours $\mathcal{K}_\mathcal{Q}$ as $P' \in \mathcal{K}_\mathcal{Q} \Leftrightarrow \exists \theta > 0; P'/\theta \in \mathcal Q$. This is just $\mathcal Q$ without a normalization constraint, so it can be characterized by the same tools, for instance, the NPA hierarchy. Our final conic problem is then
\begin{equation}\label{eq:tsirelson_conic}
	\sup_{P' \in \mathcal K_{\mathcal Q}; \prin{S_\mu}{P'} = 1}  \prin{V_\mu}{P'}.
\end{equation}
Note that there is no guarantee that the supremum will be attained, as it may be reached only in the limit of arbitrarily large $\theta$. This corresponds, in the original problem, to the supremum being reached only in the limit of arbitrarily small probability of post-selection. In such cases, the precision of a numerical solution will be limited by the size of the numbers the conic solver can work with.

As remarked before, by construction, the Hardy game has Tsirelson bound $\omega_q = 1$.

\section{Statistical power}

At first sight it might seem that the Hardy game is extremely powerful, in the sense of statistical power \cite{araujo20}, as the difference between its Tsirelson and local bounds is $1-1/2 = 1/2$, which is much larger than the CHSH's $(2+\sqrt{2})/4-3/4 = (\sqrt{2}-1)/4$, Magic Square's \cite{cabello2001,aravind2002} $1-8/9 = 1/9$, or GHZ-Mermin's \cite{greenberger1990,mermin1990,brassard2005b} $1-3/4=1/4$. However, this must be tempered by its low post-selection probability, which is at most $(5\sqrt{5}-11)/8$ \cite{rabelo2012} for Hardy behaviours.

To do so, we need to define what the statistical power of a post-selection game is. For that, we will use the framework of hypothesis testing, where the null hypothesis $H_0$ is the best local hidden variable theory for the post-selection game, and the alternative hypothesis $H_1$ is the quantum mechanical model for the experiment. We consider then the probability of obtaining $k$ victories in $t$ post-selected rounds conditioned on both hypotheses. For simplicity, we shall not consider the total number of rounds\footnote{We leave for future research to investigate whether it is advantageous to include it.} $n$.

There are two main ways of doing the hypothesis test: using the Bayes factor or the $p$-value. We shall consider both possibilities in turn and see that we end up with the same definition of statistical power. The Bayes factor is the ratio between the probabilities of the data conditioned on either hypothesis:
\begin{equation}
	K := \frac{p(k,t|H_0)}{p(k,t|H_1)} .
\end{equation}
The Supplementary Material of Ref. \cite{Shalm_2015} shows that the optimal local hidden variable strategy is simply playing each round independently, winning with probability equal to the local bound $\omega_\ell$. The overall probability is then
\begin{equation}
	p(k,t|H_0) = \binom{t}{k}\omega_\ell^k(1-\omega_\ell)^{t-k} .
\end{equation}
For the alternative hypothesis, we consider that we are playing each round independently with a strategy represented by a quantum behaviour $Q$, that thus has a probability of winning $\omega(Q)$. We assume that $\omega(Q) \ge \omega_\ell$, as otherwise the test is not interesting. The overall probability is then
\begin{equation}
	p(k,t|H_1) = \binom{t}{k}\omega(Q)^k(1-\omega(Q))^{t-k} ,
\end{equation}
and the Bayes factor is
\begin{equation}\label{eq:bayesdata}
	K = \frac{\omega_\ell^k (1-\omega_\ell)^{t-k}}{\omega(Q)^k(1-\omega(Q))^{t-k}} = \exp_2\De{k \log_2 \de{\frac{\omega_\ell}{\omega(Q)}} + (t - k)\log_2 \de{\frac{1-\omega_\ell}{1-\omega(Q)}}}.
\end{equation}
The right-hand side is the most convenient expression for numerical computations, as it avoids underflow errors.

Now let us assume that the number of victories $k$ and the number of post-selected rounds $t$ are what we expect from behaviour $Q$, namely $k \approx t \omega(Q)$ and $ t \approx n \gamma(Q)$, where $\gamma(Q)$ is the post-selection probability. Then equation \eqref{eq:bayesdata} reduces to
\begin{equation}
K = \exp_2\big[-n \gamma(Q) D\big(\omega(Q) \big\| \omega_\ell\big)\big],
\end{equation}
where
\begin{equation}
	D(p\|q) := p \log_2 \frac pq + (1-p) \log_2 \frac{1-p}{1-q}
\end{equation}
is the (binary) relative entropy or Kullback-Leibler divergence.

Therefore we see that to minimize the Bayes factor, and thus obtain a statistical result as conclusive as possible, we want to maximize the total number of rounds $n$ and the quantity $\gamma(Q)D\big( \omega(Q) \big\Vert \omega_\ell \big)$, which we define as the \emph{statistical power} of a post-selection game $G$ with behaviour $Q$:
\begin{equation} \label{eq:statPower}
    W_G(Q) := \gamma(Q)D\big( \omega(Q) \big\Vert \omega_\ell \big).
\end{equation}

Turning to the $p$-value, it is defined as the probability of achieving a result at least as \textit{extreme} as the data, given that the null hypothesis is true. For our particular case it is the probability of obtaining a number of victories that is equal or larger than the observed one, assuming that $k/t \ge \omega_\ell$. Then
\begin{equation}
	p^* := \sum_{k' = k}^t p(k',t|H_0)  = \sum_{k'=k}^t \binom{t}{k'}\omega_\ell^{k'}(1-\omega_\ell)^{t-k'} \le \exp_2 \de{-t D\left(\frac kt \middle\| \omega_\ell\right)},
\end{equation}
where the inequality comes from the Chernoff bound. Assuming, as before, that $k \approx t \omega(Q)$ and $ t \approx n \gamma(Q)$ gives us
\begin{equation}
p^* \le \exp_2\big[-n \gamma(Q) D\big(\omega(Q) \big\| \omega_\ell\big)\big],
\end{equation}
the same expression as in the case of the Bayes factor, which supports the choice of Equation \eqref{eq:statPower} as the definition of statistical power.

\subsection{Examples} \label{sec:SomeNumbers}

	Let us now compare the statistical power of post-selection games and regular nonlocal games in two situations: first, in an ideal, noiseless case, and afterwards with real experimental data.

	For the CHSH game, the Tsirelson bound is $\omega_q = (2+\sqrt2)/4$, the local bound is $\omega_\ell = 3/4$, and the probability of post-selection is one. Therefore for the ideal behaviour $Q^*_C$ we have
\begin{equation}\label{eq:CmaxStatPow}
	W_\text{CHSH}(Q^*_C) = D\Bigg(\, \frac{2 + \sqrt{2}}{4} \,\Bigg\Vert\, \frac{3}{4} \,\Bigg) \approx 0.0463 .
\end{equation}
	
	For the Hardy game, the Tsirelson bound is $\omega_q = 1$, the local bound is $\omega_\ell = 1/2$, and the optimal probability of post-selection for a Hardy behaviour is $(5\sqrt{5}-11)/8$ \cite{rabelo2012}. Therefore, with this ideal behaviour $P_H^*$ the statistical power is
\begin{equation}\label{eq:HmaxStatPow}
	W_\text{Hardy}(P_H^*) = \frac{5\sqrt{5}-11}{8} D\left( 1 \,\middle\Vert\, \frac{1}{2} \right) = \frac{5\sqrt{5}-11}{8} \approx 0.0225,
\end{equation}
much lower than CHSH's.

One might think that there is no reason why a Hardy behaviour should have the maximal statistical power in the Hardy game, and that is correct. We want therefore to optimize $W_\text{Hardy}(P)$ over $P$. To do that, first notice that it is a convex function\footnote{Because it is a perspective of the relative entropy composed with a linear function. The relative entropy is a convex function, and both taking the perspective and composing with linear functions are operators that preserve convexity. See Section 3.2.6 of Ref. \cite{boyd2004convex}.} of $P$, which implies that it is maximized at its boundary. Ref. \cite{masanes2005extremal} shows that the boundary of the set of quantum behaviours in the scenario with 2 inputs and 2 outputs per party can be produced by projective measurements on two qubits. Therefore, we can do a simple Nelder-Mead optimization over such strategies and be reasonably confident that we found that global optimum. In this way we obtained a slightly higher statistical power, $0.02518$, which is still far from CHSH.

Instead, to beat CHSH we turn to noisy experimental data; specifically the raw counts from the loophole-free Bell test of Shalm et al. \cite{Shalm_2015}, relabelled to match our notation: 
\begin{equation}
	\left(\begin{array}{cc|cc}
    6378 &  3289 &  6794   & 2825 \\
       3147  &  44336240   &  23230 & 44311018\\ \hline
      6486 & 21358  &  106 & 27562 \\
    2818 & 44302570 &  30000   & 44274530
  \end{array}\right),
\end{equation}
where the ordering is the same as in Equation \eqref{eq:vch}.

These counts give total number of rounds $n = 177358351$, total number of victories in the CHSH game $k_C = 133027048$, total number of post-selected rounds in the Hardy game $t_H = 12127$, and total number of victories in the Hardy game $k_H = 6378$.

If we assume we had predicted the probability of victory to be equal to the frequencies $k_C/n \approx 0.75005$ and $k_H/t_H \approx 0.5259$ in the CHSH and Hardy games, respectively, we can use Equation \eqref{eq:bayesdata} to compute the Bayes factors:
\begin{gather}
	K_\text{CHSH} =  0.3563, \\
	K_\text{Hardy} = 8.174 \times 10^{-8}.
\end{gather}
This shows that this data is inconclusive when analysed using the CHSH game, but a decisive rejection of local hidden variables when analysed with the Hardy game. This has been anticipated in the Supplementary Material of Ref. \cite{Shalm_2015}, where they did the statistical analysis of a loophole-free Bell test. However, they viewed it as a purely statistical technique being applied to the traditional CHSH game. By reinterpreting it as a nonlocal game itself, we not only make it conceptually simpler, but also easy to optimize the statistical test.

In order to make the numbers comparable to the ones in Equations (\ref{eq:CmaxStatPow}) and (\ref{eq:HmaxStatPow}) we compute\footnote{Which is equivalent to pretending that these frequencies are probabilities and computing the statistical power directly using Equation \eqref{eq:statPower}.} $-\frac1n\log_2 K$, obtaining $8.399 \times 10^{-9}$ and $1.327 \times 10^{-7}$. We see that even though both are very small, their ratio is roughly 16, which means that one can get a conclusion with the same statistical robustness using 16 times fewer rounds with the Hardy game.

In the next section, we will show analytically that this ratio can get arbitrarily large.

\subsection{Unbounded advantage}

Let us consider a semi-idealized photonic experiment where the only imperfection is limited detection efficiency. That is, whenever a photon arrives at a detector, it is detected with probability $\eta$ and lost with probability with $1-\eta$. In order not to introduce a loophole, lost photons are binned together with the outcome $1$, which is an optimal strategy in the scenario with 2 inputs and 2 outputs per party \cite{branciard2011}.

The detection probabilities with efficiency taken into account form then an effective behaviour $E_\eta(P)$, which is a linear function of the ideal behaviour $P$ defined as
\begin{subequations} \label{eqs:effexpectedbehav}
    \begin{align}
        & E_\eta(\p{00}{xy}) = \eta^2\p{00}{xy}, \\
        & E_\eta(\p{01}{xy}) = \eta\p{01}{xy} + \eta(1-\eta)\p{00}{xy}, \\
        & E_\eta(\p{10}{xy}) = \eta\p{10}{xy} + \eta(1-\eta)\p{00}{xy}, \\
        & E_\eta(\p{11}{xy}) = \p{11}{xy} + (1-\eta)\left(\p{01}{xy} +\p{10}{xy}\right) + (1-\eta)^2\p{00}{xy}.
    \end{align}
\end{subequations}

We want to compute the statistical power of the CHSH and Hardy games for a given (ideal) quantum behaviour $Q$ and efficiency $\eta$, namely
\begin{gather}
	W_\text{CHSH}\big( E_\eta(Q) \big) = D\bigg(\omega_\text{CHSH}\big(E_\eta(Q)\big)\bigg\Vert \frac{3}{4} \bigg), \label{eq:chshpower} \\
	W_\text{Hardy}\big( E_\eta(Q) \big) = \gamma_\text{Hardy}\big(E_\eta(Q)\big)D\bigg(\omega_\text{Hardy}\big(E_\eta(Q)\big)\bigg\Vert\frac12\bigg). \label{eq:hardypower}
\end{gather}
We want to show that the ratio between them can get arbitrarily large. More precisely, we want to show that for any given lower bound there exist $\eta$ and $Q$ such that
\begin{equation}\label{eq:ratio}
	R_{\eta,Q} = \frac{W_\text{Hardy}\big( E_\eta(Q) \big)}{\sup_{Q' \in \mathcal Q} W_\text{CHSH}\big( E_\eta(Q') \big)} ,
\end{equation}
is larger than the lower bound.

Let us start with the denominator. Maximizing $W_\text{CHSH}\big( E_\eta(Q) \big)$ boils down to maximizing $\omega_\text{CHSH}\big(E_\eta(Q)\big)$. This problem was solved in Ref. \cite{gigena2024robust}. Adapting to our notation we have that for $\eta \in (2/3,1]$
\begin{equation}
	\sup_{Q \in \mathcal Q} \omega_\text{CHSH}\big(E_\eta(Q)\big) = \frac{ \eta^2r(\eta) + 2(1-\eta)^2 + 4 }{ 8 },
\end{equation}
where $r(\eta)$ is the largest real root of the following degree-four polynomial in $\lambda$:
\begin{multline}\label{eq:polyn}
    f_s(\lambda) = \lambda^4 + (-s^2 + 4) \lambda^3 + \left(\frac{11}{4} s^4 - 12 s^2 - 4\right) \lambda^2 \\
    + (2 s^6 - s^4 - 20 s^2 - 32) \lambda + 5 s^6 - 21 s^4 + 16 s^2 - 32,
\end{multline}
where $s = 2(1-\eta)/\eta \in [0,1)$.

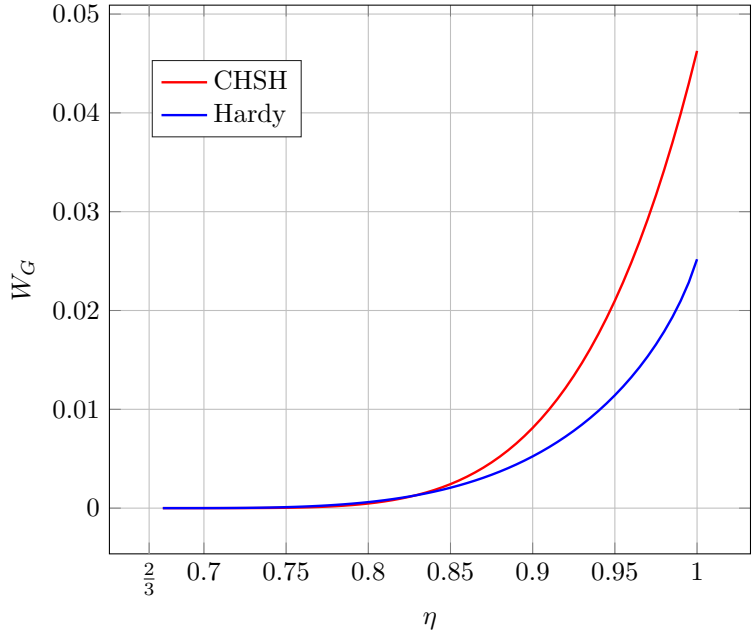
\begin{figure}[htb]
	\centering
 	\begin{tikzpicture}
		\begin{axis}[%
            yticklabel style={
                /pgf/number format/fixed,
                /pgf/number format/precision=5
            },
            scaled y ticks=false,
			scale only axis,
            extra x ticks={0.66666},
            extra x tick labels={$\frac{2}{3}$},
            xtick={0.7, 0.75, 0.8, 0.85, 0.9, 0.95, 1},
			grid=major,
			xlabel={$\eta$},
			ylabel = {$W_G$},
			axis background/.style={fill=white},
			legend style={at={(0.3,0.9)},legend cell align=left, align=left, draw=white!15!black}
			]
            \addplot[color=red, line width=0.9pt] table[col sep=space] {plot_data/relent_chsh};
        	\addlegendentry{CHSH}
            \addplot[color=blue, line width=0.9pt] table[col sep=space] {plot_data/relent_hardy};
        	\addlegendentry{Hardy}   
        \end{axis}
	\end{tikzpicture}
	\caption{Maximal statistical power of CHSH and Hardy games as a function of detection efficiency, Equations (\ref{eq:chshpower})-(\ref{eq:hardypower}).}
 \label{fig:both_relent}
\end{figure}

\begin{figure}[htb]
	\centering
 	\begin{tikzpicture}
		\begin{axis}[%
			scale only axis,
            extra x ticks={0.66666},
            extra x tick labels={$\frac{2}{3}$},
            xtick={0.7, 0.75, 0.8, 0.85, 0.9, 0.95, 1},
			grid=major,
			xlabel={$\eta$},
			ylabel = {$R_{\eta,\tilde Q_\eta}$},
			axis background/.style={fill=white},
			legend style={at={(0.94,0.3)},legend cell align=left, align=left, draw=white!15!black}
			]
            \addplot[smooth, color=red, line width=0.9pt] table[col sep=space] {plot_data/ratio};
        \end{axis}
	\end{tikzpicture}
	\caption{Ratio between maximal statistical powers of CHSH and Hardy games as a function of detection efficiency, Equation \eqref{eq:ratio}.}
 \label{fig:ratio}
\end{figure}
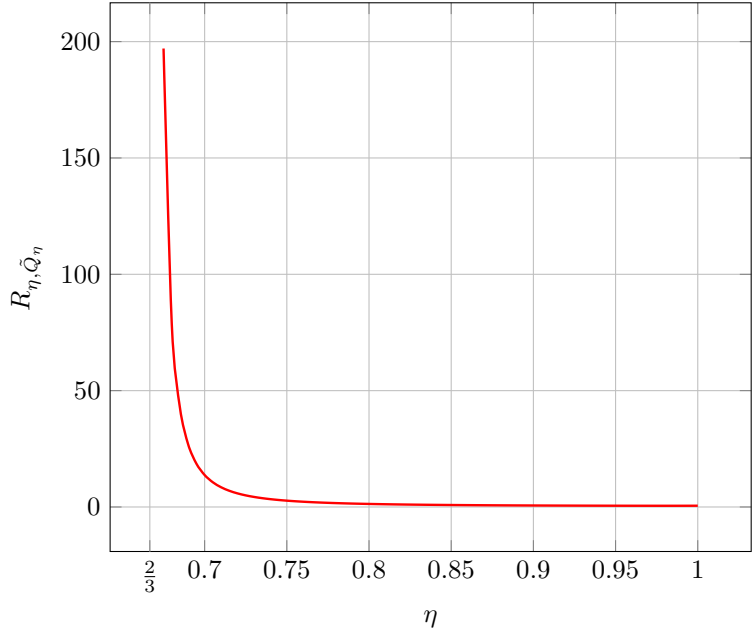
For the numerator, we maximize $W_\text{Hardy}\big( E_\eta(Q) \big)$ using the same strategy as in Section \ref{sec:SomeNumbers}, obtaining a behaviour $\tilde Q_\eta$ that is likely optimal. In Figure \ref{fig:both_relent} we plot both $W_\text{Hardy}\big( E_\eta(\tilde Q_\eta) \big)$ and $\sup_{Q'} W_\text{CHSH}\big( E_\eta(Q') \big)$ separately. We can see that although CHSH's power is much larger at $\eta = 1$, it gets lower than Hardy's around $\eta = 0.83$. In Figure \ref{fig:ratio} we plot $R_{\eta,\tilde Q_\eta}$, which seems to diverge as $\eta \to 2/3$. We are going to prove that this is in fact the case.

In order to do that, we need to find a family of behaviours $Q_\eta$ that gives us an analytical expression for the statistical power of the Hardy game. We will use the form described by Hardy in Ref. \cite{hardy1993nonlocality}, which was shown in Ref. \cite{hwang1996} to be capable of demonstrating nonlocality for all $\eta > 2/3$. We consider the two-qubit state
\begin{equation}
\ket{\psi} = N\Big[ \alpha\beta\big( \ket{01} + \ket{10} \big) + \beta^2\ket{11} \Big] ,
\end{equation}
with $ \alpha, \beta \geq 0 $, $ \alpha^2 + \beta^2 = 1 $, and $ N = 1/(2\alpha^2\beta^2 + \beta^4)^{1/2} = 1/( 1 - \alpha^4 )^{1/2} $. We also consider two projective measurements for each party with eigenvectors
\begin{gather}
\ket{A_0^0} = \ket{B_0^0} = \beta\ket{0} - \alpha\ket{1}, \qquad \ket{A_0^1} = \ket{B_0^1} = \alpha\ket{0} + \beta\ket{1}, \\
\ket{A_1^0} = \ket{B_1^0} = \ket{0}, \qquad\ket{A_1^1} = \ket{B_1^1} = \ket{1}.
\end{gather}
Then, we define $ Q_\eta(ab|xy) = \abs{ \braket{A_x^a B_y^b}{\psi} }^2 $, which is a Hardy behaviour when $ 0 < \alpha < 1 $. We define $ z = \alpha^2 $ and we find
\begin{equation}
    \gamma_\text{Hardy}\big(E_\eta(Q_\eta)\big) = \frac{ \eta z^2 (2-\eta-\eta z)}{4(1+z)} ,
\end{equation}
and
\begin{equation}
    \omega_\text{Hardy}\big(E_\eta(Q_\eta)\big) = \frac{\eta(1-z)}{2-\eta- \eta z }.
\end{equation}
We need that $ \omega_\text{Hardy}\big(E_\eta(Q_\eta)\big) > 1/2 $ to have nonlocality, which implies $ z < 3 - 2 / \eta $. Maximizing $ \gamma_\text{Hardy}\big(E_1(Q_\eta)\big)$ over $z$, we find the maximum at $ z_0 = \frac{1}{2}\big(\sqrt{5}-1\big)$. To make $z$ a function of $\eta$ we do a linear interpolation between $0$ at $\eta=2/3$ and $z_0$ at $\eta=1$, obtaining
\begin{equation} \label{eq:linearz}
    z(\eta) = 3z_0(\eta - 2/3),
\end{equation}
which in fact satisfies the condition $ z(\eta) < 3 - 2 / \eta $ for $\eta > 2/3$.

Let then
\begin{gather}
	w_H(\eta) = W_\text{Hardy}\big(E_\eta(Q_\eta)\big) , \\
	w_C(\eta) = \sup_Q W_\text{CHSH}\big(E_\eta(Q)\big),
\end{gather}
such that
\begin{equation}
	R_{\eta, Q_\eta} = \frac{w_H(\eta)}{w_C(\eta)}.
\end{equation}
We want to show that
\begin{equation}
	\lim_{\eta \to {2/3}^+} R_{\eta, Q_\eta} = \infty .
\end{equation}
Since $w_H(2/3) = w_C(2/3) = 0$, we use L'Hôpital's rule to compute the limit. Computing the derivatives of $w_H(\eta)$ is straightforward, the first four are
\begin{equation}
	w_H'(2/3) = w_H''(2/3) = w_H'''(2/3) = 0\quad\text{and}\quad w_H''''(2/3) = \frac{2781-1215\sqrt5}{4\log(2)}
\end{equation}
To compute the derivatives of $w_C(\eta)$ we use implicit differentiation, as the explicit expression is too complicated even for the computer to work with. It's straightforward but tedious, so we use a computer algebra system to compute the first six derivatives:
\begin{equation}
	w_C'(2/3) = w_C''(2/3) = w_C'''(2/3) = w_C''''(2/3) = w_C'''''(2/3) = 0\quad\text{and}\quad w_C''''''(2/3) = \frac{87480}{\log(2)}.
\end{equation}
Since the fourth derivative of $w_C(\eta)$ is zero but the fourth derivative of $w_H(\eta)$ is larger than zero, a recursive application of L'Hôpital's rule gives us the desired limit.

\section{Generalizations of the Hardy game}

\begin{table}[t]
    \centering
    $$
    \begin{array}{c|ccccccccc} 
        \hline\hline
        & \; s = 2 & & s = 3 & & s = 4 & & s = 5 & & s = 6 \\ \hline
        k=2\;\; &\;  0.0225 & & 0.0291 & & 0.0289 & & 0.0271 & & 0.0250 \\
	k=3\;\; & \; 0.0353 & & 0.0446 & & 0.0435 & & 0.0402 & & 0.0366 \\
        k=4\;\; & \; 0.0441 & & 0.0549 & & 0.0528 & & 0.0483 & & 0.0437 \\
        k=5\;\; & \; 0.0508 & & 0.0624 & & 0.0594 & & 0.0539 & & 0.0485 \\
        k=6\;\; & \; 0.0560 & & 0.0682 & & 0.0645 & & 0.0581 & & 0.0521 \\ \hline\hline
    \end{array}
    $$
    \caption{Lower bounds on the statistical power \eqref{eq:hardypower_general} of the generalized Hardy game with $s$ inputs and $k$ outputs per party.}
    \label{tab:mqHardy}
\end{table}

Several authors have proposed different generalizations of Hardy's paradox \cite{boschi1997ladder,cabello2002,liang2003,liang2005,chen2013hardy,cereceda2017,meng2018hardy,chen2024}, with error tolerance and more inputs and outputs. Here we shall focus on the generalizations from Refs. \cite{boschi1997ladder,chen2013hardy,meng2018hardy}, as they can be straightforwardly reformulated as post-selection games. We adopt the formulation of Meng et al. \cite{meng2018hardy}, as it can recover the others as particular cases. Let then Alice and Bob have $s$ inputs and $k$ outputs each, and define $\p{a{<}b}{xy} := \sum_{a<b}P(ab|xy)$. The generalized Hardy paradox is as follows: if a behaviour $P$ is local, then
\begin{subequations} \label{HardyCondits3}
\begin{alignat}{2}
& \pp{ a{<}b }{ 0, s-1 } = 0, \label{condit1} \\
& \pp{ a{<}b }{ i, {i-1} } = 0 \qquad & \forall i \in \{ 1, \ldots, s-1\}, \label{condit2}\\
    & \pp{ a{>}b }{ i-1, i-1 } = 0 \qquad & \forall i \in \{ 1, \ldots, s-1\}. \label{condit3}
\end{alignat}
\end{subequations}
imply that
\begin{equation} \label{HardyConseq3}
    \pp{ a{<}b }{ s-1, s-1 } = 0.
\end{equation}
However, there exists a quantum behaviour that satisfies conditions \eqref{HardyCondits3} but violates equation \eqref{HardyConseq3}.

We define the generalized Hardy game as post-selecting on the events entering conditions (\ref{HardyCondits3})-(\ref{HardyConseq3}), winning if the event appearing in condition \eqref{HardyConseq3} occurs and losing otherwise. In order to avoid needlessly discarding rounds of the game, we define the probability distribution $\mu(x,y)$ to be uniform over the $2s$ inputs appearing in conditions (\ref{HardyCondits3})-(\ref{HardyConseq3}) and zero otherwise.

The probability of winning the game with a behaviour $P$ given that post-selection was succesful is then 
\begin{equation} \label{WinPost3}
    \omega(P) = \frac{ \pp{ a{<}b }{ s-1, s-1 } }{ \pp{ a{<}b }{ s-1, s-1 } + \pp{ a{<}b }{ 0, s-1 } + \sum_{i=1}^{s-1}\big( \pp{ a{<}b }{ i, i-1 } + \pp{ a{>}b }{ i-1, i-1 } \big) }.
\end{equation}
By construction, $\omega(P_H) = 1$ for any behaviour $P_H$ that respects conditions \eqref{HardyCondits3} and violates condition \eqref{HardyConseq3}, which we call again a Hardy behaviour. This implies that the Tsirelson bound is $\omega_q = 1$.

In order to compute the local bound, note that the proof of the generalized Hardy's paradox implies that $\omega_\ell < 1$. Note furthermore that the numerator of Equation \eqref{WinPost3} is the probability of a single event, and therefore can be only either 0 or 1 for deterministic behaviours. Since Theorem \ref{thm:localbound} implies that the local bound is attained by deterministic behaviours, we have that $\omega_\ell \le \frac12$. This bound can be achieved for example with a behaviour where Bob always answers 1, and Alice always answers 0, unless her input is 0, in which cases she answers 1. Therefore the local bound is $\omega_\ell = \frac12$.

The statistical power of the generalized Hardy game for any Hardy behaviour $P_H$ is then given by \eqref{eq:statPower}:
\begin{equation}\label{eq:hardypower_general}
	W_{\text{Hardy}_{s,k}}(P_H) = \gamma(P_H)D\left( 1 \middle\Vert \frac{1}{2} \right) = \pp{ a{<}b }{ s-1, s-1 }/2s.
\end{equation}
Optimizing it reduces to optimizing the so-called Hardy probability $\pp{ a{<}b }{ s-1, s-1 }$, which we do with the techniques from \cite{boschi1997ladder, meng2018hardy}, using two $ k $-dimensional qudits and projective measurements. The results are shown in Table \ref{tab:mqHardy}. We see that the statistical power peaks at $s=3$, but always increases with increasing $k$.

%\section{Conclusion}

%In this paper we have introduced a generalization of nonlocal games that also allow for post-selection. They provide a natural formulation of possibilistic nonlocality as nonlocal games and allow straightforward experimental tests. Besides theoretical niceties, post-selection games also have an eminently practical application: they have an unbounded advantage in statistical power over regular nonlocal games. That is, they are capable of extracting much stronger statistical conclusions from the same data, or reaching a statistical conclusion of the same strength from much less data.

%Future directions include using them to simplify proofs based on Hardy's paradox, such
%Vieira et al. used a Hardy-like argument to study relaxations of the assumptions in Bell's theorem \cite{vieira2024}.
%Mančinska and Viddick defined a parallel repetition of Hardy's argument \cite{mancinska2015}.

%Generalize it to the multipartite scenario

%Also interesting is to develop tailored games to maximize the statistical power for specific device-independent applications
\section{Code availability}

Implementations of the algorithms for computing the local bound and the Tsirelson bound introduced in Section \ref{sec:formal} are available in the following repository: \url{https://github.com/araujoms/Post-selection-games}.

\section{Acknowledgements}

The research of IABG and MA was supported by the Q-CAYLE project, funded by the European Union-Next Generation UE/MICIU/Plan de Recuperación, Transformación y Resiliencia/Junta de Castilla y León (PRTRC17.11), and also by the Department of Education of the Junta de Castilla y León and FEDER Funds (Reference: CLU-2023-1-05). MA was also supported by the Spanish Agencia Estatal de Investigación, Grants No. RYC2023-044074-I and PID2024-161725OA-I00
%\clearpage

\printbibliography

\end{document}